\documentclass[sigconf, 10pt, nonacm]{acmart}
\makeatletter
\def\@ACM@checkaffil{% Only warnings
    \if@ACM@instpresent\else
    \ClassWarningNoLine{\@classname}{No institution present for an affiliation}%
    \fi
    \if@ACM@citypresent\else
    \ClassWarningNoLine{\@classname}{No city present for an affiliation}%
    \fi
    \if@ACM@countrypresent\else
    \ClassWarningNoLine{\@classname}{No country present for an affiliation}%
    \fi
}
\makeatother

\usepackage{xspace}

\newcommand{\name}{\textsc{ReTri}\xspace}

\newtheorem{lemma}{Lemma}

\author{Anton Juerss}
\affiliation{
  \institution{Weizenbaum Institute \& TU Berlin}
}

\author{Stefan Schmid}
\affiliation{
  \institution{TU Berlin \& Weizenbaum Institute}
}

\usepackage{algorithm}
\usepackage{algorithmic}
\usepackage{subcaption}

\title{Revisiting Bruck: Phase-Efficient All-to-All Collective Communication in Reconfigurable Networks}

\begin{document}

\begin{abstract}
All-to-All collective communication is a key performance bottleneck for distributed machine learning (ML) and high-performance computing (HPC) workloads, where dense traffic increasingly stresses scale-up interconnects. While these ML and HPC workloads have driven unprecedented infrastructure demand, optical reconfigurable networks (ORNs) offer a promising path forward as they can reconfigure the network at runtime. By adapting the physical topology to the active workload, they improve communication cost and bandwidth utilization. However, optical reconfigurable networks introduce a fundamental trade-off for collective communication: each reconfiguration requires global synchronization, during which communication is suspended for at a non-negligible delay. Additionally, their benefit is critically contingent on whether the collective consists of structured phases that can be served by sparse and reusable topology states. 

In this paper, we revisit Bruck's All-to-All implementation and demonstrate the benefits of topology optimization in which both communication pattern and reconfiguration strategy are co-designed. We present \name, a bidirectional All-to-All schedule for ORNs based on the Trivance algorithm. \name uses balanced ternary block propagation to complete All-to-All in $\lceil \log_3 n\rceil$ phases. The reconfiguration strategy induced by \name's pairwise bidirectional exchanges allows reconfiguration delays to be amortized across multiple phases. Preliminary simulations show that \name improves completion time by up to $10\times$ over Pairwise All-to-All, even for millisecond-scale reconfiguration delays, and improves reconfigurable Bruck by up to $2.1\times$.
\end{abstract}

\maketitle

\section{Introduction}

To meet the increasing compute and memory demands of deep learning workloads, including recommendation systems and Mixture of Expert models, modern distributed systems connect thousands of accelerators in hyperscale datacenters~\cite{Jouppi2023TPUv4,Liao2025,Qian2024Alibaba, Weiyang2023TopoOpt}. In these large-scale ML systems, efficient communication across accelerators is crucial for training and inference performance, since activations, embeddings, and tokens must be synchronized~\cite{Liao2025}. These synchronizations are typically realized through All-to-All collective communication, where each accelerator sends distinct data to every other accelerator~\cite{Thakur2005}. Their impact on end-to-end performance is substantial, accounting for up to 55\% of MoE end-to-end training time~\cite{Liao2025}. The dense communication pattern of All-to-All makes it a major performance bottleneck and increasingly stresses scale-up interconnects~\cite{Khani2021SiP}: every accelerator exchanges distinct data, raising serious concerns about congestion and requiring high network bandwidth~\cite{Liao2025,Qian2024Alibaba}. 

The design of datacenter interconnect fabrics therefore plays a central role in the scalability and efficiency of large-scale ML and HPC systems~\cite{Qian2024Alibaba, Jouppi2023TPUv4, Ding2025Rails}. Whereas conventional, electrically switched networks are power-intensive and may lead to performance bottlenecks, optical reconfigurable networks (ORNs) have emerged as a promising alternative. ORNs establish bidirectional high-bandwidth optical links between endpoints, introducing the enhanced capability to adjust and optimize the physical topology~\cite{Avin2019DAN, Griner2021, Avin2025}. However, direct optical connectivity incurs non-negligible reconfiguration delay~\cite{Porter2013,Ghobadi2016, Ding2025Rails}. Whether reconfiguration is beneficial depends on balancing its delay against the resulting reductions in path length, congestion, and transmission time. Determining when to reconfigure is itself non-trivial: per-phase reconfiguration may incur excessive overhead, whereas a static topology can forfeit substantial communication gains~\cite{Vamsi2025Bend,Weiyang2023TopoOpt}. This motivates collective schedules whose communication phases admit a small number of efficient and reusable topology states. In state-of-the-art GPU training systems, All-to-All is typically treated within scale-up domains as a destination-oriented redistribution primitive, for example in MoE token dispatch~\cite{Griner2021, Qin2025, Liao2025}. Each node partitions its payload by destination and the communication layer sends each resulting block directly to its target endpoint. If the full traffic matrix is injected as one bulk operation in an ORN, the optical layer processes dense and unstructured traffic, withholding the opportunity to adapt the topology to a specific part of the workload~\cite{Liao2025,Weiyang2023TopoOpt}.
When the transmissions of pairwise All-to-All are executed sequentially, the optical topology can in principle be reconfigured to match the traffic of each phase. However, the algorithm requires $n-1$ distinct phases and therefore a linear number of topology changes, each benefiting only a single phase and offering no structure that can be reused. Since the fixed reconfiguration delay cannot be amortized across phases, pairwise All-to-All is impractical for reconfiguration-aware execution except when reconfiguration overhead is close to zero.

This paper is driven by three simple but powerful observations. \emph{First}, even an optimal reconfiguration schedule cannot improve over a static topology if the collective algorithm does not match the constraints of ORNs. This motivates co-designing the All-to-All pattern around the port constraints of optical switches, such that communication is exposed as sparse phases. Multi-hop algorithms such as Bruck's algorithm~\cite{Bruck94} decompose All-to-All into distinct, synchronized \textit{communication phases} which are well-suited for ORNs. \emph{Second}, optimization should account not merely for bandwidth utilization, but also to reduce the number of communication phases each of which require topology adjustments for optimal communication. Fewer structured phases reduce reconfiguration requirements, while still allowing the topology to be matched efficiently to the active traffic. \emph{Third}, to fully exploit the potential of ORNs, bidirectional optical links should carry bidirectional traffic: if node $u$ sends data to node $v$, then node $v$ should simultaneously send data to node $u$ over the same optical circuit.

Based on these observations, we present a novel All-to-All communication pattern and reconfiguration strategy that addresses the described challenges: we propose \name, a bidirectional, sparse All-to-All communication pattern that synchronizes multiple pairwise exchanges to fully utilize each optical link. The induced reconfiguration schedule forms connected subrings that serve the active traffic while preserving persistent topology states for subsequent phases. Leveraging Trivance’s communication pattern~\cite{juerss2026}, blocks are routed over shortest paths with more blocks per transmission. Therefore, \name reduces the number of communication phases and thereby the associated number of reconfigurations. With two electrical-to-optical transceivers per accelerator, \name completes All-to-All in $\lceil \log_3 n\rceil$ phases, an effective reduction by $33\%$ compared to Bruck's All-to-All ORN-feasible implementation. Our preliminary results show that \name reduces All-to-All completion time by up to $90\%$ for small reconfiguration delays of $1\,\mu\mathrm{s}$ and consistently outperforms Bruck by up to $2.1\times$. Even for large reconfiguration delays, \name achieves speedups of $1.5\times$ to $6.9\times$ for $1\,\mathrm{ms}$ and showing improvements even for $50\,\mathrm{ms}$ delay for large workloads. 
 
\section{Motivation: Why All-to-All Changes Under Reconfiguration}

Prior work has established that collective communication is particularly compatible with optical reconfigurable networks (ORNs) since the communication pattern is known in advance, and collectives proceed in synchronized phases, between which the network can be reconfigured~\cite{ Vamsi2025Bend, Kumar2024Case}. Existing \textbf{phased All-to-All algorithms}, including Bruck's algorithm, are typically designed for logical one-directional communication patterns and therefore, by design, do not fully exploit bidirectional optical links~\cite{Bruck94,Thakur2005}. A common approach to improve network utilization --- used for ring-based collectives such as AllReduce --- is to \textbf{mirror} the schedule by halving the workload and executing one half in each direction, improving bandwidth utilization by up to $2\times$~\cite{Daniele2024Swing, juerss2026,Sack2015}. In ORNs, however, optimizing for bandwidth utilization alone is insufficient: since reconfiguration is costly, each optimized topology state should carry as much All-to-All traffic as possible. The objective is therefore to minimize the number of topology adjustments while still transmitting over direct, optimal optical connections.

\begin{figure}[t]
    \centering
    \includegraphics[width=\columnwidth]{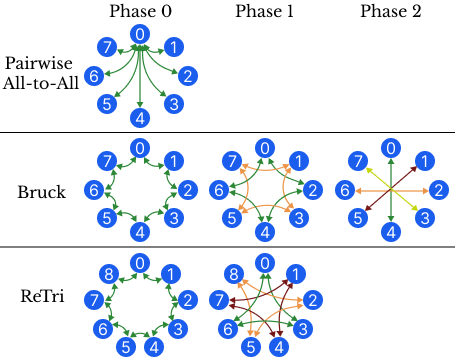}
    \caption{Communication phases of Bruck, \name, and Pairwise All-to-All (for clarity, only transmissions from node~0 are shown). \name completes in one fewer phase than Bruck with one additional node. Arrow colors identify the connected subrings used for reusable communication.}
  \label{fig:intro}
  \vspace{-10px}
\end{figure}

For our \textbf{reconfigurable architecture}, we consider a scale-up domain, i.e., a tightly coupled group of $n$ accelerator endpoints within a single server or memory domain, connected to a programmable optical interconnect with $p$ endpoint-facing optical ports. The interconnect establishes bidirectional optical circuits between endpoint ports and can reconfigure these circuits on demand, incurring a reconfiguration delay $\alpha_r$\cite{Porter2013,Farrington2013,Liao2025,Ghobadi2016}. If each node has only one transceiver, with $p=n$, the interconnect can realize only a peer-to-peer matching in each configuration: every node can be connected to at most one other node~\cite{Vamsi2025Bend, Porter2013, Mellette2017RotorNet}. Such a topology cannot be connected for $n>2$, and All-to-All must therefore be served by repeatedly reconfiguring direct pairings across phases. To maintain a persistent connected topology over all endpoints, each node must have a degree of at least two~\cite{Ding2025Rails, juerss2026bridgeoptimizingcollectivecommunication}. Under this constraint, the minimal topology is a ring, which requires two electrical-to-optical transceiver per node and hence $p=2n$ endpoint-facing ports at the optical-circuit switch (OCS)~\cite{Ding2025Rails}.

Pairwise All-to-All is a poor fit for reconfigurable networks, as this pattern misses \textbf{structured phases} for which the topology can adjust for (Figure~\ref{fig:intro}). Each phase should expose \emph{only} as many concurrent data transmissions as there are available ports --- under the minimal assumption of two transceivers per node and $2n$ optical ports at the OCS, this corresponds to $2n$ simultaneous directed transmissions. Bruck's algorithm with additional mirroring splits All-to-All in $\lceil \log_2 n\rceil$ phases, which together form connected subrings of size $\frac{n}{2^k}$ in phase $k$ to ensure persistent topology states. These patterns are visualized in Figure~\ref{fig:intro}. This raises the question: \textit{Can we achieve the same property with fewer phases?} Theoretically, with two transceivers per node, All-to-All can be completed in $\lceil \log_3 n \rceil$ phases.

Assuming an initial logical ring, Bruck's two-port pattern makes each node communicate with peers at offsets $3^k$ and $2\cdot 3^k$. This achieves $\lceil \log_3 n\rceil$ phases in an ideal multi-port message-passing model, but it is not directly feasible for ORNs. Given the OCS has $2n$ ports, it can only establish $n$ bidirectional matching optical links, i.e., two incident links per node. Bruck's pattern, however, uses established links only in one direction. This generally requires each node to send to two peers and receive from two different peers in the same phase; realizing these non-pairwise exchanges would require up to four incident optical links per node. Moreover, when Bruck is implemented over a ring, its traffic still progresses along one logical direction, increasing congestion and propagation delay. 

Thus, the \textbf{effectiveness of reconfiguration schedules} in ORNs is determined by the underlying All-to-All communication pattern. This opens the design space for an All-to-All schedule that completes in $\lceil \log_3 n\rceil$ phases. Such schedule must be bidirectional by design, using both directions of each optical link while preserving reusable connected topologies and balancing link utilization. In the next section, we present \name, an algorithm that addresses these requirements and explores the trade off between performance gains and reconfiguration overhead.

\section{\name: Shortest Path All-to-All}
\label{sec:retri}

We present \name, an efficient bidirectional All-to-All communication pattern tailored for ORNs, that establishes reusable subrings for sparse reconfiguration for $2n$ OCS ports completing in $\lceil \log_3 n \rceil$ phases.

\subsection{Ternary Communication Pattern}
In order to make optimal use of both available ports at each node for bidirectional communication, we use the Trivance approach by Juerss et al.~\cite{juerss2026}. Rather than forwarding data along a single logical direction, each node communicates in Trivance with two symmetric peers in every phase. Specifically, in phase $k$, each node $r$ in an $n$-node network communicates with two peers defined as:
\begin{center}
    $\pi(r, k, n) = (\pi_{\text{left}},\pi_{\text{right}}) = 
\begin{cases}
\pi_{\text{left}} = r - 3^k \bmod n,\\
\pi_{\text{right}} = r + 3^k \bmod n,
\end{cases}
$
\end{center}
In AllReduce, Trivance exposes a latency--bandwidth tradeoff: it reduces the number of communication phases---matching the lower bound and shortens the paths lengths compared to a ring-based AllReduce. This improves completion time for small to medium message sizes. All-to-All exhibits a different set of constraints. Since transmitted blocks are distinct and cannot be reduced, every additional forwarding hop contributes directly to propagation delay, link occupancy, and congestion. By contrast, the bidirectional pattern underlying Trivance forwards blocks in both logical directions and thereby shortcuts the ring. When applied to All-to-All, this structure reduces the number of communication phases and decreases path lengths. As in Bruck's two-port version, each node must receive both incoming transmissions of the current phase before it can advance to the next. Consequently, \name improves effective throughput as well as propagation delay compared to Bruck. Mirroring Bruck's pattern in the opposite direction can improve port utilization, but it is still constrained to $\lceil \log_2 n \rceil$ phases.

\subsection{Block Propagation in All-to-All}
The ternary pattern of Trivance reaches all nodes within $s = \lceil \log_3 n \rceil$ communication phases. The resulting data movement, however, must be defined separately from AllReduce. In AllReduce, data blocks can be reduced between phases at each node, whereas in All-to-All every transmitted block is unique. In each phase of \name, the ternary structure partitions the remaining data into three groups: blocks that stay local, blocks propagated to the left, and blocks propagated to the right. Thus, for an initial message of $m$ bytes per node, each node transmits $\frac{m}{3}$ bytes in each direction. Since every phase expands the set of reachable destinations by a factor of three, the canonical network size for \name is $n=3^s$.

We define the block propagation of \name analogously to Bruck's radix-$3$ All-to-All~\cite{Bruck94}, but using balanced ternary digits. We assume that $n=3^s$ and denote $B[r,d]$ as the block initially stored at node $r$ and destined for node $d$. We define its
signed offset as
$
\Delta_{r,d}=\operatorname{ucr}_n((d-r)\bmod n)
\in \{\frac{-(n-1)}{2},\dots,0,\dots,\frac{n-1}{2}\},
$
where $\operatorname{ucr}_n(\cdot)$ maps a distance offset modulo $n$ to its unique centered representative. Since $n=3^s$, every offset has a unique
balanced ternary representation
$
\Delta_{r,d}=\sum_{k=0}^{s-1} \tau_k(r,d)3^k$ with $\tau_k(r,d)\in\{-1,0,+1\}.$
This representation defines how each block reaches its destination. In communication phase $k$, node $i$ sends all currently stored
blocks with $\tau_k=+1$ to $(i+3^k)\bmod n$, with $\tau_k=-1$ to $(i-3^k)\bmod n$. Blocks with
$\tau_k=0$ are not sent in phase $k$. Hence, the two outgoing messages
of node $i$ in phase $k$ are
$
M_{i,+}^{(k)}
=
\{B[r,d] : i = (r+\sum_{\ell<k}\tau_\ell(r,d)3^\ell)\bmod n,
\ \tau_k(r,d)=+1\},
$ send to $(i+3^k)\bmod n$
and
$
M_{i,-}^{(k)}
=
\{B[r,d] : i = (r+\sum_{\ell<k}\tau_\ell(r,d)3^\ell)\bmod n,
\ \tau_k(r,d)=-1\}.
$ send to $(i-3^k)\bmod n$. After all $s=\log_3 n$
phases, block $B[r,d]$ has moved by
$
\sum_{k=0}^{s-1}\tau_k(r,d)3^k=\Delta_{r,d}
$
positions and therefore reaches its destination $d$. The full correctness proof can be found in Appendix~\ref{appendix_blocks}. For example, for $n=9$, a block whose destination is two positions to the left has offset $-2$. This offset can be represented by the balanced ternary digits $(1,-1)$, since the block first moves one phase to the right in phase $0$ and $3$ to the left in phase $1$.

\subsection{Reconfiguration Strategy: Reusable Ternary Subrings}
%To enable persistent reconfiguration, the network should maintain, under the given port limitation, a connected topology to future peers across multiple collective phases instead of requiring reconfigurations at every phase. Ring-like fabrics or rail structured scale-out systems are therefore a practical and common design for reconfigurable distributed ML systems. Rings emerge as the lowest-degree connected topology that preserves reachability among all nodes. As shown in prior work, reconfigurations that create connected subrings can significantly reduce communication cost by shortcutting the network with optical links, while maintaining reachability to future communication peers. This allows a single topology reconfiguration to benefit multiple subsequent phases without requiring further topology adjustments. We build on this observation and study how All-to-All communication can further exploit such persistent, ring-based reconfiguration patterns.

The performance on static rings of \name's ternary communication pattern is similar to the performance of shortest-path source-destination All-to-All which only requires a single communication phase. However, its main advantage is the algebraic structure of its phases from which we can directly derive our new topology states for each reconfiguration. This structure makes the active communication pattern compatible with sparse, reusable optical topologies under the two-port constraint. Given the OCS has $2n$ ports, each node can establish exactly optical connections to two other nodes at any time. Hence, a feasible topology is a degree-two bidirectional graph: every node has two incident optical links, and the resulting topology is a collection of rings~\cite{Ding2025Rails}. For \name, these rings are induced directly by the ternary communication pattern. Given a reconfiguration schedule $\mathbf{x}=(x_0,\ldots,x_{s-1})$, where $x_k=1$ denotes a reconfiguration before phase $k$, the ORN configures the edge set
$E_k=\{\{i,(i+3^k)\bmod n\}: i\in\{0,\ldots,n-1\}\}$.
Equivalently, each node $i$ is connected to the two peers $(i-3^k)\bmod n$ and $(i+3^k)\bmod n$. Algorithm~\ref{alg:retri} defines the edge-set construction of \name and the corresponding subrings induced per phase.

\begin{algorithm}[t]
  \caption{\name: Reusable Ternary Subrings}
  \label{alg:retri}
  \begin{algorithmic}[1]
    \REQUIRE $n=3^s$ nodes, reconf. schedule $\mathbf{x}=(x_0,\ldots,x_{s-1})$
    \STATE $s\gets \log_3 n$
    \FOR{each phase $k\in\{0,\ldots,s-1\}$}
        \IF{$x_k=1$}
            \FOR{each residue $i\in\{0,\ldots,3^k-1\}$}
                \STATE Construct subring $S_i^{(k)}=\{u\mid u\equiv i \pmod{3^k}\}$
                \FOR{each node $u\in S_i^{(k)}$}
                    \STATE Set bidirectional optical links to $u-3^k \bmod n \quad\text{and}\quad u+3^k \bmod n$
                \ENDFOR
            \ENDFOR
        %\ELSE
            %\STATE Reuse the topology from phase $k-1$
        \ENDIF
    \ENDFOR
  \end{algorithmic}
\end{algorithm}

While this reconfiguration schedule directly connects peers for the next phase, it also creates a new subring illustrated in Figure~\ref{fig:intro}. For a reconfiguration before phase $k$, the topology is partitioned into $3^k$ subrings. Each subring has size $\frac{n}{3^k}$. For each residue class $i\in\{0,\ldots,3^k-1\}$, define
\[
    S_i^{(k)}
    :=
    \{\,u\in\{0,\ldots,n-1\}\mid u\equiv i \pmod{3^k}\,\}.
\]
Prior work showed that for OCSes with $2n$ ports, reconfiguration strategies that induce subrings reduce communication cost of the active phase while preserving reachability for subsequent phases~\cite{juerss2026bridgeoptimizingcollectivecommunication, rahman2026harvestadaptivephotonicswitching}. We prove in Lemma~\ref{lemma:subrings} in the Appendix~\ref{appendix_subrings} that the subrings induced by \name are minimal and include all nodes of future peers.

\subsection{Performance Gains and Reconfiguration Costs}
Reconfiguration during collective operations introduces a fundamental tradeoff: reconfigurations can adapt the topology to the next communication phase and thereby reduce completion time, but each adjustment incurs a reconfiguration overhead $\delta$. We analyze this tradeoff for \name using an extended Hockney $\alpha$-$\beta$ cost model, following related work~\cite{Won2023Astrasim, Vamsi2025Bend, juerss2026, Aashaka2023TACCL}:
\begin{equation*}
    C^A(m) = \underbrace{s \cdot \alpha_s}_{\text{per-phase delay}}
    + \sum_{k=0}^{s-1} (\underbrace{h_k \cdot \alpha_h}_{\substack{\text{per-hop}\\\text{delay}}}
    + \underbrace{m_{k} \cdot c_{k} \cdot \beta}_{\substack{\text{transmission}\\\text{delay}}})
    + \underbrace{R\cdot \delta}_{\substack{\text{reconf.}\\\text{delay}}}
\end{equation*}
where, in each phase $k$ of total $s$ phases, the algorithm $A$ incurs a startup latency $\alpha_s$ (e.g., data preparation), per-hop delay $\alpha_h$ for each hop $h_k$, a transmission delay of $\beta \cdot m_k \cdot c_k$ with $m_k$ as the chunk size transmitted in phase $k$, $\beta=\frac{1}{b}$ the network cost per byte based on bandwidth $b$, $c_k$ maximum network congestion per directional link in phase $k$ and a reconfiguration overhead $\delta$ for the number of reconfigurations $R$. The total cost of \name to complete All-to-All in a static ring network of $n$ is as follows:
\[
\begin{aligned}
C^{\name}(m)
&=
\underbrace{\log_3 n \cdot \alpha_s}_{\text{per-phase delay}}
+
\sum_{k=0}^{\log_3 n - 1} (\underbrace{3^k \cdot \alpha_h}_{\substack{\text{per-hop}\\\text{delay}}} + \underbrace{ \frac{m\cdot3^k\cdot\beta}{3}}_{\substack{\text{transmission}\\\text{delay}}})\\
&=
\log_3 n \cdot \alpha_s
+
\left(\alpha_h+\beta\frac{m}{3}\right)
\cdot
\frac{n-1}{2}.
\end{aligned}
\]
Let $\mathbf{x}=(x_0,\ldots,x_{s-1})\in\{0,1\}^s$ denote a reconfiguration schedule, where $x_k=1$ indicates that the OCS is reconfigured before phase $k$, and $x_k=0$ that the previous topology is reused. If a topology is configured at phase $k$, then phase $k+t$ can be served over the same subrings with hop distance $3^t$. Hence, a phase segment of length $r$, where the topology is not reconfigured, has communication cost 
\[
    C_{\mathrm{seg}}^{\name}(r)
    =
    \sum_{t=0}^{r-1}
    \left(
        \alpha_s
        +
        \left(\alpha_h+\beta\frac{m}{3}\right)3^t
    \right)
    =
    r\alpha_s
    +
    y\frac{3^r-1}{2},
\]
where $y := \alpha_h+\beta\frac{m}{3}$ and each node transmits two message of size $\frac{m}{3}$ in each direction. After a topology reconfiguration, the communication costs of the next phase is minimal: each node is directly connected to both its peers. Each additional phase without reconfiguration increases the communication distance, and therefore the congestion and propagation delay by a factor of three. Thus, the placement of reconfigurations determines how the $s=\log_3 n$ communication phases are partitioned into topology segments. Following prior work, the optimal schedule balances these segment lengths~\cite{juerss2026bridgeoptimizingcollectivecommunication}: for a fixed number $R$ of reconfigurations, the resulting $R+1$ segments should differ in length by at most one~\cite{juerss2026bridgeoptimizingcollectivecommunication}. In the aligned case where $R+1$ divides $\log_3 n$, all segments have length $\frac{\log_3 n}{R+1}$, and the cost for R reconfigurations is
\[
    C^{\name}(R)
    =
   \log_3 n \cdot \alpha_s
    +
    (R+1)\cdot (\alpha_h+\beta\frac{m}{3})\cdot
    \frac{3^{\frac{\log_3 n}{R+1}}-1}{2}
    +
    R\delta .
\]
If the ORN is reconfigured before every phase with $R = s -1$ and therefore optimize the topology between each phase reducing the distance of communicating peers to 1. This leads to the optimal costs: 
\[
\begin{aligned}
C^{\name}(\log_3 n -1)
&=
\log_3 n \left(\alpha_s+\alpha_h+\beta\frac{m}{3}\right)
+
(\log_3 n -1)\delta .
\end{aligned}
\]

\noindent Given this, frequent reconfiguration between phases improves the performance and reduces the communication cost of All-to-All by $\left(\alpha_h+\beta\frac{m}{3}\right)\left(\frac{n-1}{2}-\log_3 n\right)$ at the cost of additional reconfiguration delay of $(\log_3 n-1)\delta$.
In comparison, Bruck requires $\log_2 n$ communication phases between which the topology may reconfigure, leading to maximal performance gains:
\[
\begin{aligned}
C^{Bruck}(\log_2 n -1) =
\log_2 n \left(\alpha_s+\alpha_h+\beta\frac{m}{4}\right)
+
(\log_2 n -1)\delta .
\end{aligned}
\]
Bruck requires $\frac{\log_2 n}{\log_3 n} = \log_2 3 \approx 1.58\times$ as many phases as \name. While transmission delay is for every-phase reconfiguration identical, Bruck incurs about $58\%$ higher per-phase latency and per-hop latency. More importantly, achieving the same topology optimization inflicts also $58\%$ more reconfiguration delay $\delta$. For large ORNs systems with millisecond-scale reconfiguration delay, this difference can determine whether reconfigurations are even beneficial which gives \name with a significant advantage over Bruck.

\noindent We analyzed the cost-optimal schedules for a fixed number of reconfigurations. This raises the questions of \textit{when and how often are reconfigurations beneficial for \name?} Given the network parameters, the answer is obtained by evaluating the completion time for each feasible number of reconfigurations and selecting the lowest cost schedule. For \name, this corresponds to $R^\star=\arg\min_{0\le R\le s-1} C^{\name}(R)$, where $s=\log_3 n$. The optimal value of $R$ depends on the relative cost of communication and reconfiguration. Reconfiguration between each phase is beneficial when the optimization in hop distance and congestion exceeds the reconfiguration overhead. This is more likely for large $n$, large message size $m$, low bandwidth, or high per-hop delay.
\section{Preliminary Evaluation}

Our analysis explores a clear tradeoff between performance gains by topology adjustments and reconfiguration overhead. The central questions to determine \name's effectiveness are: under which network parameters do reconfigurations with \name improve static All-to-All completion time, and how does \name compare to existing All-to-All reconfiguration schedules? To answer these questions, we conduct preliminary simulations of \name using Astra-Sim~\cite{Won2023Astrasim} with ns-3~\cite{ns3-simulator} as the network backend. We compare \name against two baselines. The first is the static Pairwise All-to-All algorithm, which operates on a static ring. It injects all blocks for each destination concurrently, not relying additional synchronizations within the collective. Consequently, it attains the communication cost lower bound. The second baseline is the reconfigurable Bruck algorithm, namely Bridge~\cite{juerss2026bridgeoptimizingcollectivecommunication}, which established connected subrings based on Bruck's~\cite{Bruck94} communication pattern, which are reusable. A reconfiguration at step $k$ establishes topology links directly between the communicating peers, similar as in \name. We extend Bridge by employing a mirrored All-to-All with half the data $m$; thereby both logical directions are used to provide a fair comparison to \name. We model a scale-up system with $400\,\mathrm{Gbps}$ link bandwidth, a propagation delay of $1\,\mu\mathrm{s}$, and a per-phase delay of $1.7\,\mu\mathrm{s}$, following configurations used in related work~\cite{Liao2025, Aashaka2023TACCL, rahman2026harvestadaptivephotonicswitching, Farrington2013}. We evaluate workloads from $m=1\,\mathrm{KB}$ to $256\,\mathrm{MB}$ and vary the reconfiguration delay $\delta$ over a broad range, from $1\,\mu\mathrm{s}$ to $50\,\mathrm{ms}$, to cover both fast optical interconnects with limited port number and architectures that provide more ports but incur higher reconfiguration delays~\cite{Farrington2010Helios, polatis7000series, calient2022ocsdatasheet}. Since \name is naturally aligned with powers of three, while Bruck with powers of two, we compare each algorithm at its favorable size. In Figures~\ref{fig:maxtrix_64_static} and~\ref{fig:maxtrix_64_bruck}, Bruck operates on $64$ nodes, whereas \name operates on $81$ nodes, which favors Bruck --- All-to-All completion time increases linearly with network size.

\begin{figure}[t]
    \includegraphics[width=\linewidth]{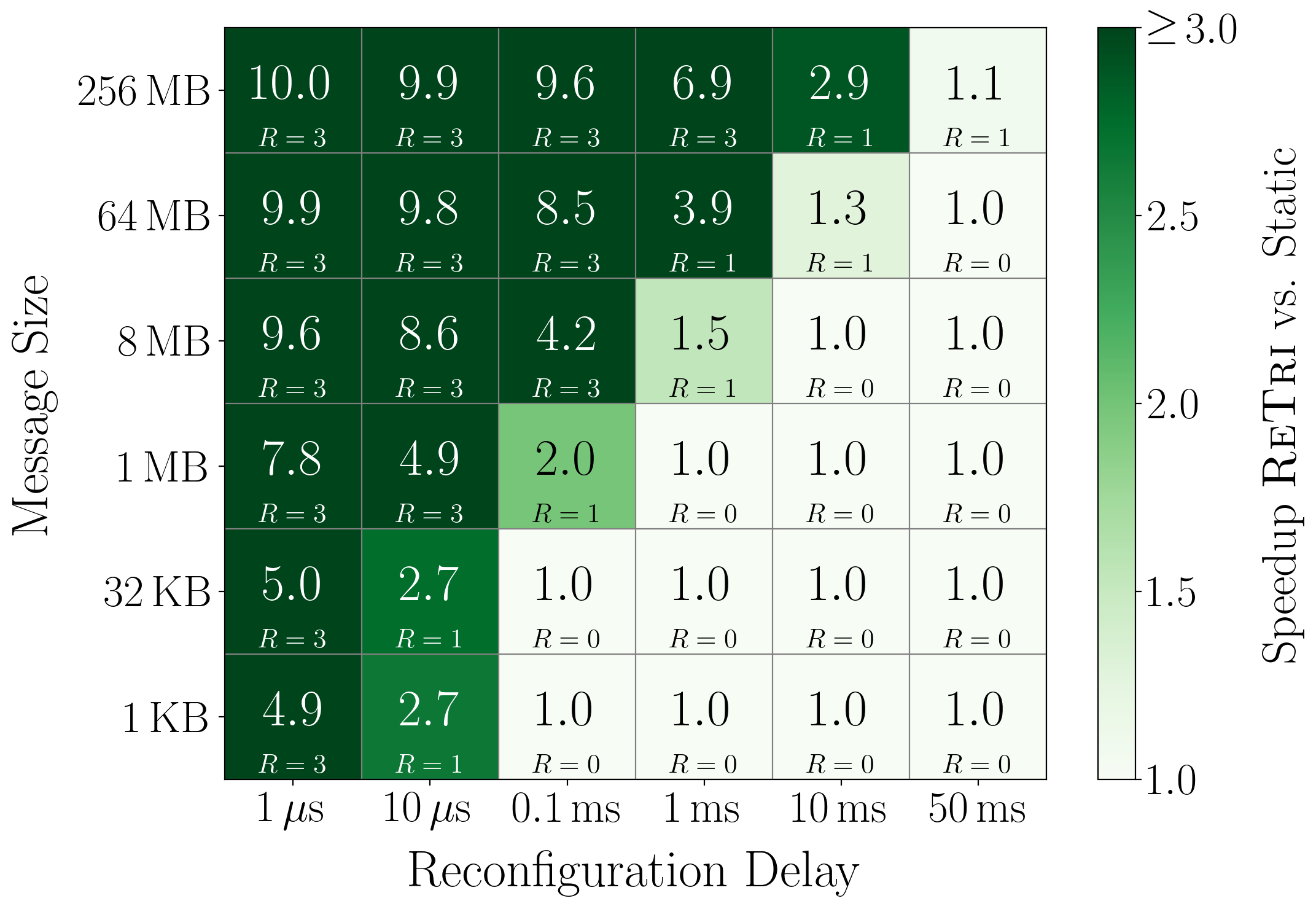}
    \caption{Heatmap showing All-to-All speedups of \name ($n=81$) compared to static All-to-All ($n=64$) with $R$ denoting the number of reconfigurations by \name. Color intensity represents the speedup, with values capped at $3.0\times$. As message size increases, \name reconfigures more often (increasing $R$), improving over Pairwise All-to-All by up to $10\times$.}
    \label{fig:maxtrix_64_static}
\end{figure}

\begin{figure}[t]
    \includegraphics[width=\linewidth]{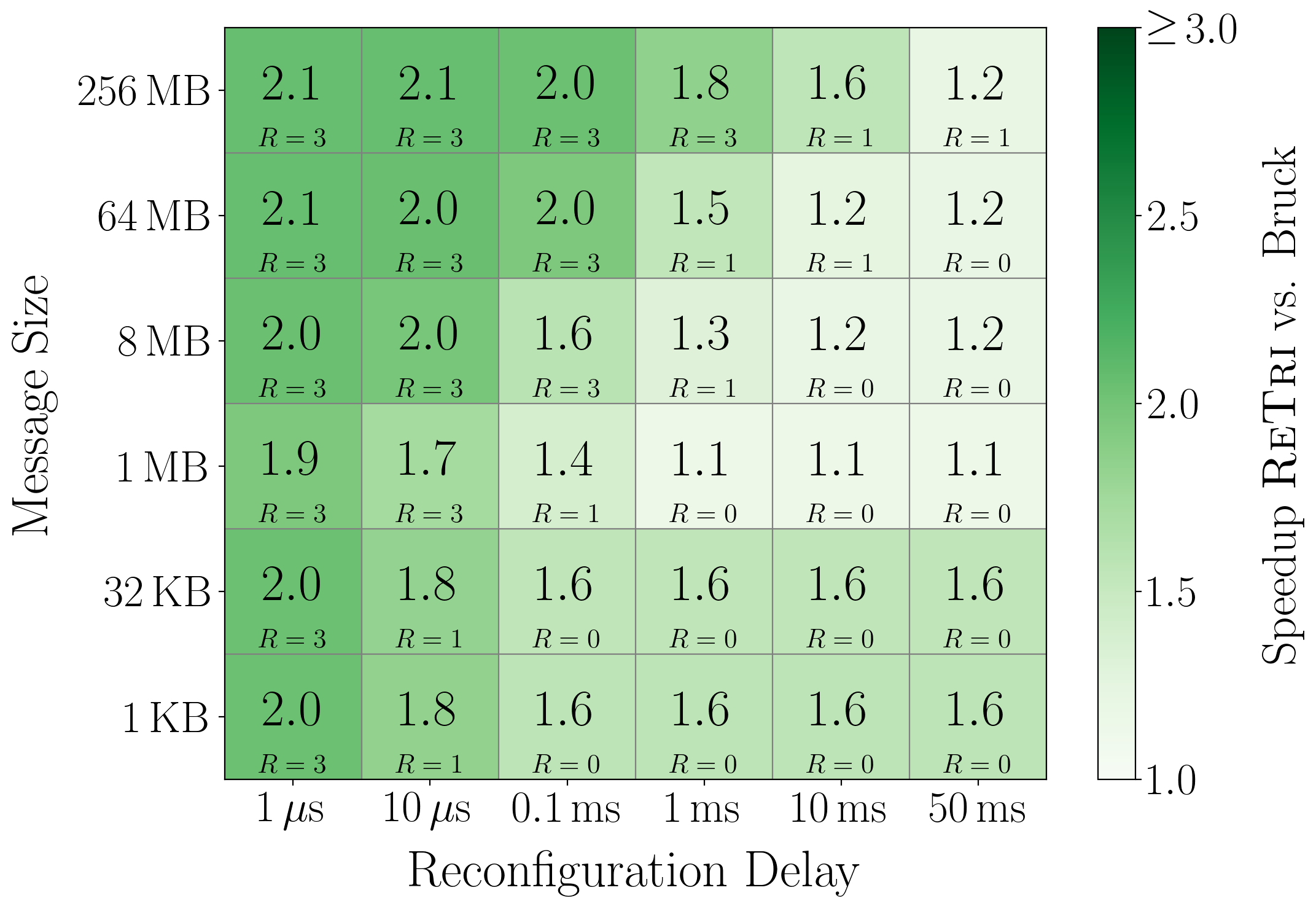}
    \caption{Heatmap showing All-to-All speedups of \name ($n=81$) compared to Bridge, Bruck's reconfiguration strategy ($n=64$). As reconfiguration delay increases, \name reconfigures less often (reducing $R$), while retaining gains from its static communication schedule when $R=0$.}
    \label{fig:maxtrix_64_bruck}
\end{figure}

Figure~\ref{fig:maxtrix_64_static} reports the speedup of \name over static shortest-path All-to-All. For low reconfiguration delay, $\delta=1\,\mu\mathrm{s}$, \name achieves speedups of up to $10\times$. In general, larger messages increase the completion time, so the proportional performance gains from reconfiguration grow and can outweigh the fixed overhead $\delta$. Reconfiguration remains beneficial up to $\delta=10\,\mu\mathrm{s}$ for small messages, up to $\delta=1\,\mathrm{ms}$ for messages up to $8\,\mathrm{MB}$, and even up to $\delta=50\,\mathrm{ms}$ for $256\,\mathrm{MB}$ messages. Compared to Bruck in Figure~\ref{fig:maxtrix_64_bruck}, \name achieves consistent speedups of up to $2.1\times$, despite operating on $81$ nodes compared to Bruck on $64$ ($26\%$ larger network size). For small messages, \name benefits primarily from its lower phase count and shorter forwarding paths, yielding consistent speedups of at least $1.6\times$. The number of reconfigurations in this regime is $R=0$; hence, \name's improvement stems from its advantage in static networks, where the number of phases and path length dominate performance for small messages. For larger messages, speedups range from $1.2\times$ to $2.1\times$, where transmission delay dominates and reconfigurations are again feasible. 

For low $\delta$, frequent topology updates yield $5$--$10\times$ speedups over the static baseline, while \name still maintains gains of up to $1.5\times$ at $8\,\mathrm{MB}$ and $1.1\times$ even with $\delta=50\,\mathrm{ms}$ at $256\,\mathrm{MB}$. These results identify the regimes in which reconfigurations are beneficial: as message size and network size increase, the performances gains from optimizing the topology increasingly outweigh the reconfiguration overhead. Figures~\ref{fig:maxtrix_8_all} and~\ref{fig:maxtrix_256_all} in the appendix further show that, for larger networks, \name remains beneficial even at high reconfiguration delays, improving over the static baseline by $1.2\times$ at $\delta=150\,\mathrm{ms}$ for $256\,\mathrm{MB}$ messages, whereas Bruck no longer improves over static execution. This confirms that \name is most effective when communication is expensive: in these regimes, sparse or frequent reconfiguration substantially reduces hop distance and link congestion. Across the evaluated parameter space, \name also provides an effective reconfiguration strategy than Bruck.

\section{Discussion, Challenges, and Research Agenda}
We establish \name as a concrete step toward reconfiguration-aware All-to-All in scale-up reconfigurable domains. We discuss the remaining challenges that shape a broader research agenda across algorithm design and systems integration.

\textbf{Non-power-of-three Networks: }Our analysis focuses on the aligned case $n=3^s$, where our ternary pattern induces nested reusable subrings. For arbitrary network sizes, \name establishes a ring under the identical pattern which performs identical to the next largest power-of-three size. Consequently, \name achieves the same performance at, for example, 64 and 81 nodes. Although All-to-All completion time increases almost linearly with network size, \name achieves significantly lower completion time than Bruck despite Bruck operating on a network that is $20\%$ smaller. Future work should optimize \name for power-of-two network sizes, which are common in TPU deployments~\cite{Jouppi2023TPUv4}, while accounting not only for path length and congestion but also for structural properties such as link reusability.

\textbf{Extension Beyond Rings: }
We can generalize \name beyond rings, with $d=2q$ optical ports per node to balanced radix $b=d+1$, where phase $k$ connects each node to the peers at offsets $\pm a b^k$. The resulting topology then consists of a degree-$d$ circular ring. We leave this for future work.

\textbf{Overlapping Computation with Reconfigurations: }
Reconfiguration delay does not necessarily lie on the critical path. In practical workloads, topology changes may overlap with computation or data preparation. For \name, this implies that the practical benefit of reconfiguration-aware schedules may be larger than suggested.

\textbf{Other Collectives: }
Although \name targets All-to-All, the same design principle may extend to other collectives with structured communication phases for ORNs, such as AllReduce. These collectives introduce additional dependencies as blocks may be reduced or replicated across phases. In particular, the ring algorithm completes AllReduce with minimal data transmission and congestion, so \name could be applied to primarily improve phase count and propagation delay.

\textbf{Routing Challenges: }
When the topology is reconfigured for the active phase, all scheduled transmissions are direct. When a topology is reused across later phases, however, blocks may traverse multiple hops inside a subring. Efficient routing must therefore preserve the balanced bidirectional structure of \name while avoiding congestion and head-of-line blocking.

\textbf{Synchronization Between Reconfigurations: }
Reconfiguration requires all nodes to complete the current phase before optical circuits can adjust. Barrier overhead, stragglers, and control-plane latency can reduce the benefit of frequent topology updates. While, we show that \name improves All-to-All for reconfiguration delays of $50\,\mathrm{ms}$ to $150\,\mathrm{ms}$, future work may explore synchronization challenges further.

\textbf{Existing reconfiguration strategies: }
Existing reconfiguration strategies for collective workloads provide important insights for adapting optical topologies, but they can overcomplicate the problem by optimizing topology changes independently of the communication pattern. In many cases, the reconfiguration schedule can be derived directly from the communication pattern and is deterministic as the workload is known for All-to-All and AllReduce; when the pattern is constrained for ORNs, however, the remaining topology optimization is limited in performance and possibilities.

In addition to the larger research avenues discussed, technical questions remain open. These include extending the evaluation of \name to hardware-level simulation with a concrete OCS model. More broadly, \name motivates a design space of reconfiguration-aware collectives in which the communication schedule and optical topology are co-designed rather than optimized independently.
\vspace{-5px}
\section{Conclusion}
By revisiting Bruck's logarithmic phase structure under physical ORN constraints, we proposed \name, an All-to-All algorithm which achieves $\lceil \log_3 n\rceil$ communication phases while preserving pairwise bidirectional exchanges and inducing reusable subring topologies. This structure exposes a direct tradeoff between communication distance and reconfiguration overhead: frequent reconfiguration reduces hop count and congestion, while sparse reconfiguration amortizes topology changes across multiple phases. We showed that reducing the number of communication phases directly improved completion time by up to $2.1\times$ to Bruck.

We believe this direction is particularly promising for future scale-up systems, where dense All-to-All traffic increasingly stresses electrical interconnects. By combining structured collective phases with reusable topologies, reconfigurable networks can move beyond direct peer matching and support communication patterns that are both phase-efficient and benefit over realistic reconfiguration delays.\\

\noindent \textbf{{\large Acknowledgments}}

\medskip
\noindent We thank Volker Stocker for his helpful inputs and feedback. This work is part of a project that has received funding from the European Research Council (ERC), project FortifyNet (grant 101287293), 2026-2027 and by the German Federal Ministry  of  Research,  Technology  and  Space  (BMFTR)  under grant 16DII141 “Weizenbaum Institut für die vernetzte Gesellschaft”.

\begin{figure}[!h]
    \centering
    \includegraphics[width=0.6\linewidth]{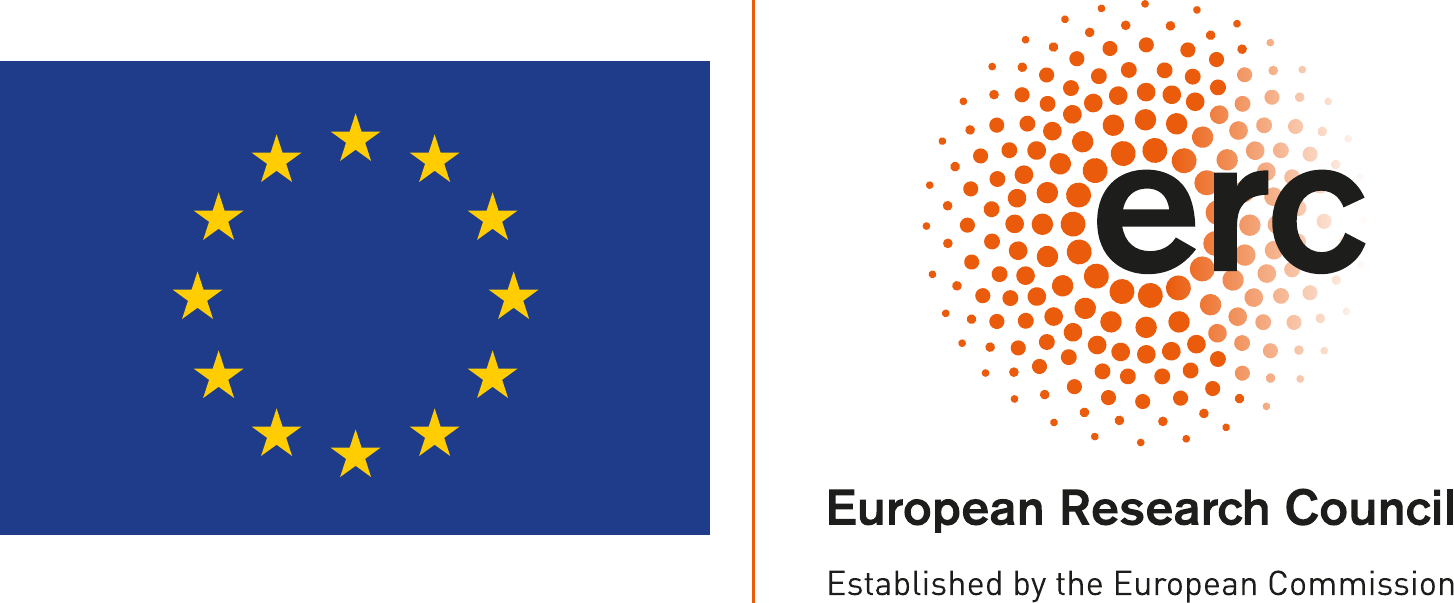}
    \label{fig:my_label}
\end{figure}

\bibliographystyle{ACM-Reference-Format}
\bibliography{sample-base}

\appendix

\section{Proof of Minimal Subrings for \name}
\label{appendix_subrings}
\begin{lemma}
\label{lemma:subrings}
For phase $k$, the subrings $S_i^{(k)}$ are minimal under the $2n$ port constraint: they contain exactly the nodes that must remain mutually reachable for phase $k$ and all later phases, and the induced subring uses the minimum degree needed to support bidirectional communication at every node.
\end{lemma}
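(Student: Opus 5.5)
The plan is to make the informal statement precise in two parts, \emph{closure/minimality of vertex sets} and \emph{minimality of degree}, and prove each using only the definitions of $\pi(r,k,n)$, the block propagation rule, and Algorithm~\ref{alg:retri}, with $n=3^s$. Throughout, node $u$ in phase $k$ communicates with $u\pm 3^k \bmod n$, and $S_i^{(k)}=\{u : u\equiv i \pmod{3^k}\}$.

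\emph{Step 1 (closure under current and future peers).}
For $u\in S_i^{(k)}$ and any $\ell\ge k$, the peers $u\pm 3^\ell$ are congruent to $u$ modulo $3^k$, since $3^k\mid 3^\ell$. Hence every communication partner of $u$ in phases $k,\dots,s-1$ lies in $S_i^{(k)}$.

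The same closure holds for blocks. By the balanced ternary propagation rule, a block held by $u$ after phase $k-1$ only moves by $\sum_{\ell\ge k}\tau_\ell 3^\ell$, which is a multiple of $3^k$. So every intermediate holder, and the final destination, stay inside $S_i^{(k)}$.

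It follows that no reachability between different residue classes is ever needed from phase $k$ on. Keeping each $S_i^{(k)}$ internally connected therefore suffices for phase $k$ and all later phases.

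\emph{Step 2 (no proper subset suffices).}
Fix $u\in S_i^{(k)}$. Every $v\in S_i^{(k)}$ can be written as $v=u+j3^k$ with $|j|\le (n/3^k-1)/2$. Take the balanced ternary expansion of $j$ in base $3$ with $s-k$ digits, and shift it to positions $k,\dots,s-1$.

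This expansion is exactly the digit string $\tau_k,\dots,\tau_{s-1}$ of some block currently at $u$ that is destined for $v$. Hence $u$ must reach every member of its class through the remaining phases. Equivalently, the graph on $S_i^{(k)}$ generated by the future offsets $\pm 3^\ell$ for $\ell\ge k$ is connected, since already the offset $3^k$ alone generates the cyclic group on the class.

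So any set that stays closed under the future communication of $u$ must contain all of $S_i^{(k)}$. The classes are therefore the minimal reachability components.

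The main obstacle is the existence claim behind Step 2: that such a block actually exists at $u$ after phase $k-1$. I would prove it by counting. Consider the original sources $r$ whose lower digits $\tau_0,\dots,\tau_{k-1}$ bring the block to $u$; the upper digits of those blocks then range over all $3^{s-k}$ combinations, by uniqueness of the balanced ternary representation of the offsets in $\{-(n-1)/2,\dots,(n-1)/2\}$.

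\emph{Step 3 (degree minimality).}
If $|S_i^{(k)}|=n/3^k\ge 3$, i.e.\ $k<s$, the peers $u-3^k$ and $u+3^k$ are distinct. Node $u$ must exchange data with both in phase $k$, in both directions. Under the $2n$-port constraint, each node has at most two bidirectional optical links, so degree $2$ is both necessary and attainable.

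A connected graph on at least $3$ vertices with maximum degree $2$ and all degrees at least $2$ is a cycle. Algorithm~\ref{alg:retri} realizes exactly this cycle $u\mapsto u+3^k$, which is a single cycle because $3^k$ generates the class's cyclic group of order $n/3^k$. No degree-$1$ (matching) topology can connect three or more nodes, so the induced subring has minimum degree.

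Finally, the degenerate case $n/3^k=3$ needs a brief remark: there $u+3^k$ and $u-3^k$ are still distinct, and the subring is a triangle.
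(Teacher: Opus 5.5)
Your proof is correct and follows the paper's own argument: current and future peers $u\pm 3^\ell$ ($\ell\ge k$) stay in the residue class because every offset is a multiple of $3^k$; conversely no node outside $S_i^{(k)}$ is ever needed; and degree two is required and realized by the subring. Your additions — that $3^k$ generates the class so it must be connected as a single cycle, and the block-existence check — fill in the step the paper leaves implicit (why same-class peers force the whole class to be connected) without changing the route.
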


\begin{proof}
    Node $u$ communicates in phase $k$ with $u\pm 3^k$, which lie in the same residue class modulo $3^k$. For every later phase $j>k$, the offset is $3^j=3^{j-k}\cdot 3^k$, so all future peers of $u$ also lie in the same residue class. Therefore, any reusable topology established at phase $k$ must keep all nodes in $S_i^{(k)}$ connected. Conversely, no node outside $S_i^{(k)}$ is needed for any future offset, since all future offsets are multiples of $3^k$. Finally, every node must be able to communicate bidirectionally with its two phase neighbors; this requires degree two, which is exactly realized by the subring on $S_i^{(k)}$. Hence, the subring is minimal and includes all future peers.
\end{proof}

\section{Ternary Block Propagation Proof}
\label{appendix_blocks}
The propagation rule of \name relies on $n=3^s$ and every source-destination offset has a unique signed base-$3$ representation. For a block $B[r,d]$, let $\Delta_{r,d}=\operatorname{ucr}_n((d-r)\bmod n)$ be its centered offset. We represent this offset as $\Delta_{r,d}=\sum_{k=0}^{s-1}\tau_k(r,d)3^k$, where $\tau_k(r,d)\in\{-1,0,+1\}$. The digit $\tau_k$ determines the action in phase $k$: the block is sent left if $\tau_k=-1$, kept local if $\tau_k=0$, and sent right if $\tau_k=+1$.

\begin{lemma}
\label{lem:balanced-ternary-propagation}
Assume $n=3^s$. For every source node $r$ and destination node $d$, the offset $\Delta_{r,d}$ has a unique representation $\Delta_{r,d}=\sum_{k=0}^{s-1}\tau_k(r,d)3^k$ with $\tau_k(r,d)\in\{-1,0,+1\}$. Moreover, in every phase $k$, each node sends exactly $\frac{n}{3}$ blocks to the left and exactly $\frac{n}{3}$ blocks to the right.
\end{lemma}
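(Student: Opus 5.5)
Two claims must be shown: existence and uniqueness of the balanced ternary digits, and the exact count of $\frac{n}{3}$ blocks per direction per node per phase.

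\emph{Existence and uniqueness.} The map
\[
\phi:\{-1,0,+1\}^s\to\mathbb{Z},\qquad (\tau_0,\dots,\tau_{s-1})\mapsto\sum_{k=0}^{s-1}\tau_k 3^k,
\]
has domain of size $3^s=n$. Its image lies in $\{-\frac{n-1}{2},\dots,\frac{n-1}{2}\}$, because $|\phi(\tau)|\le\sum_{k<s}3^k=\frac{3^s-1}{2}$. This target set also has exactly $n$ elements. So it suffices to show that $\phi$ is injective. Suppose $\phi(\tau)=\phi(\tau')$ with $\tau\neq\tau'$, and let $j$ be the smallest index where they differ. Subtracting the two sums and reducing modulo $3^{j+1}$ gives $(\tau_j-\tau'_j)3^j\equiv 0 \pmod{3^{j+1}}$. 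Since $\tau_j-\tau'_j\in\{\pm1,\pm2\}$ is not divisible by $3$, this is a contradiction. Hence $\phi$ is a bijection onto the centered residues, and $\Delta_{r,d}$, which lies in that set by definition of $\operatorname{ucr}_n$, has exactly one representation.

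\emph{Counting blocks per phase.} Fix a phase $k$ and a node $i$. The blocks held by $i$ at the start of phase $k$ are the $B[r,d]$ with $r+\sum_{\ell<k}\tau_\ell(r,d)3^\ell\equiv i \pmod n$. I reparametrize each block by its source $r$ and its digit vector $\tau=\phi^{-1}(\Delta_{r,d})$; the pair $(r,d)$ and the pair $(r,\tau)$ determine each other. The holding condition then fixes $r=i-\sum_{\ell<k}\tau_\ell 3^\ell \bmod n$ once $\tau$ is chosen. So the blocks at node $i$ correspond bijectively to all $n$ vectors $\tau\in\{-1,0,+1\}^s$. The block is sent right exactly when $\tau_k=+1$, with the other $s-1$ digits free, which gives $3^{s-1}=\frac{n}{3}$ blocks. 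The same count holds for left ($\tau_k=-1$) and for kept blocks ($\tau_k=0$).

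The step I expect to need the most care is this bijection: showing that every node holds exactly $n$ blocks at every phase, with no collisions. The recovery of $r$ from $(i,\tau)$ is the key point: given $i$ and $\tau$, the source is forced, and distinct $\tau$ give distinct blocks. As a corollary, the number of blocks at each node stays $n$ throughout, consistent with the $\frac{m}{3}$ bytes per direction used in the cost model.
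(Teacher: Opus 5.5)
Your proof is correct and follows essentially the same route as the paper. For the representation, both arguments combine a counting step ($3^s=n$ digit vectors against $n$ centered offsets) with injectivity; for the $\frac{n}{3}$ count, both reparametrize the blocks at node $i$ by their digit vectors, with the source forced by $r\equiv i-\sum_{\ell<k}\tau_\ell 3^\ell \pmod n$. The differences are minor. You prove injectivity using the smallest differing index and divisibility by $3^{j+1}$, whereas the paper uses the largest differing index and the magnitude bound $2\sum_{\ell<j}3^\ell=3^j-1<3^j$. You also check explicitly that the image of $\phi$ lies in the centered range, a step the paper leaves implicit.
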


\begin{proof}
First, we show uniqueness. Assume for contradiction that two distinct vectors $a,b\in\{-1,0,+1\}^s$ represent the same offset, i.e.,
$\sum_{k=0}^{s-1}a_k3^k=\sum_{k=0}^{s-1}b_k3^k$.
Then $\sum_{k=0}^{s-1}(a_k-b_k)3^k=0$, where each coefficient $a_k-b_k$ lies in $\{-2,-1,0,1,2\}$. Let $j$ be the largest index with $a_j\neq b_j$. The leading term has absolute value at least $3^j$, whereas all lower-order terms summed up can have at most absolute value $2\sum_{\ell<j}3^\ell=3^j-1$. Hence the leading term cannot be substituted, a contradiction which leads to a different destination. Thus the representation is deterministic. Since there are $3^s=n$ unique signed digit vectors and exactly $n$ centered offsets in $\{-(n-1)/2,\ldots,(n-1)/2\}$, uniqueness implies that the mapping
$(\tau_0,\ldots,\tau_{s-1})\mapsto \sum_{k=0}^{s-1}\tau_k3^k$
is a bijection onto the centered offsets. Therefore, every destination node has exactly one signed base-$3$ path from every source node. It remains to show balance. Fix a phase $k$ and a node $i$. Before phase $k$, a block with digit vector $\tau$ is located at
$(r+\sum_{\ell<k}\tau_\ell3^\ell)\bmod n$.
For every digit vector $\tau$, there is exactly one source $r$ such that this node is $i$, namely
$r\equiv i-\sum_{\ell<k}\tau_\ell3^\ell \pmod n$.
Thus, the blocks stored at $i$ before phase $k$ are in one-to-one correspondence with all $3^s=n$ digit vectors. Among these vectors, exactly $3^{s-1}=\frac{n}{3}$ have $\tau_k=+1$, exactly $\frac{n}{3}$ have $\tau_k=-1$, and exactly $\frac{n}{3}$ have $\tau_k=0$. Hence, in phase $k$, node $i$ sends $\frac{n}{3}$ blocks to the right, $\frac{n}{3}$ blocks to the left, and keeps $\frac{n}{3}$ blocks local.
\end{proof}

\section{Additional Evaluation}

\begin{figure}[H]
    \includegraphics[width=\linewidth]{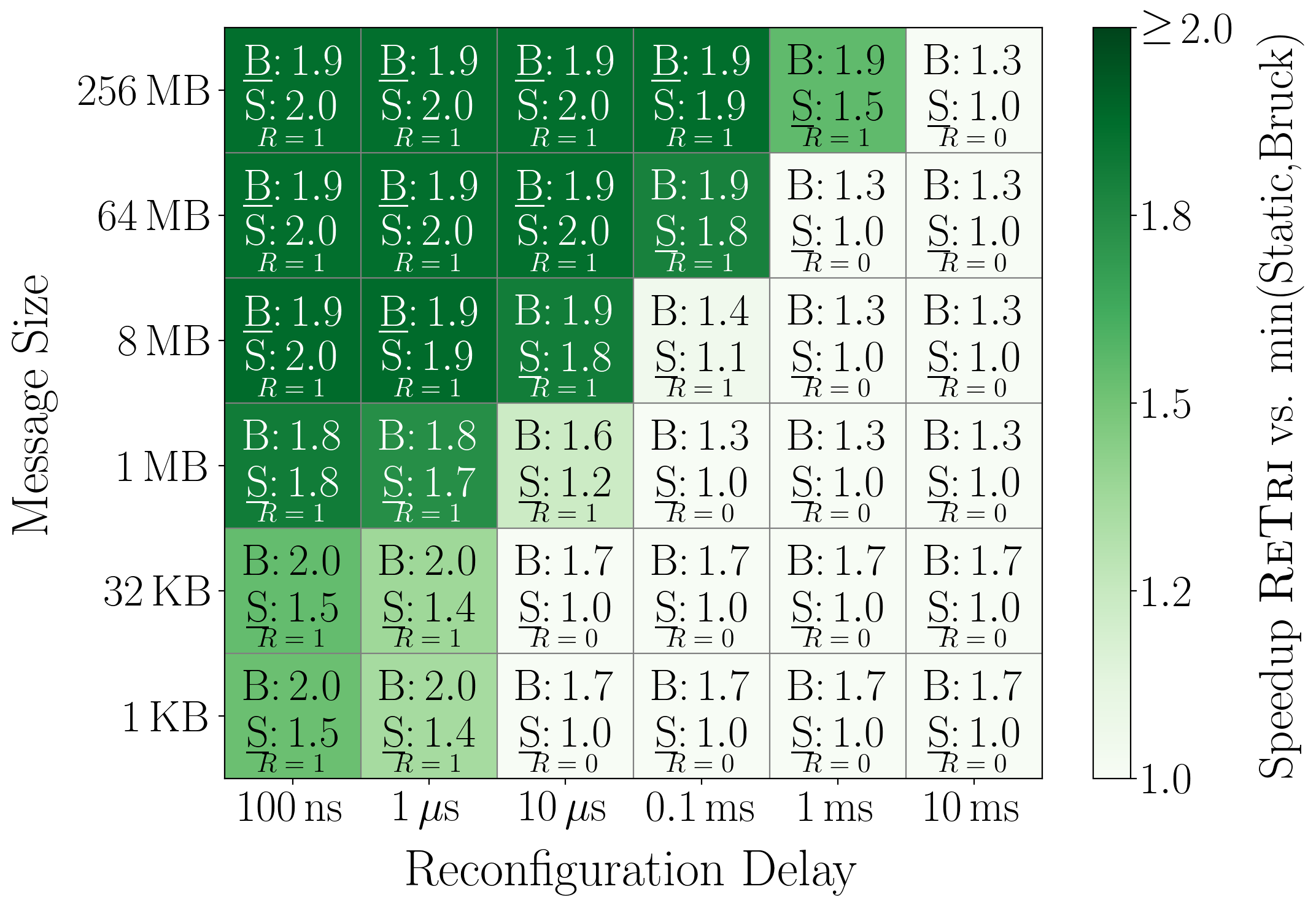}
    \caption{Heatmap showing All-to-All speedups of \name ($n=9$) compared to both reconfigurations with Bruck (B) ($n=8$) and static Pairwise All-to-All (S). The underlined speedup shows the better performing baseline.}
    \label{fig:maxtrix_8_all}
\end{figure}

\begin{figure}[H]
    \includegraphics[width=\linewidth]{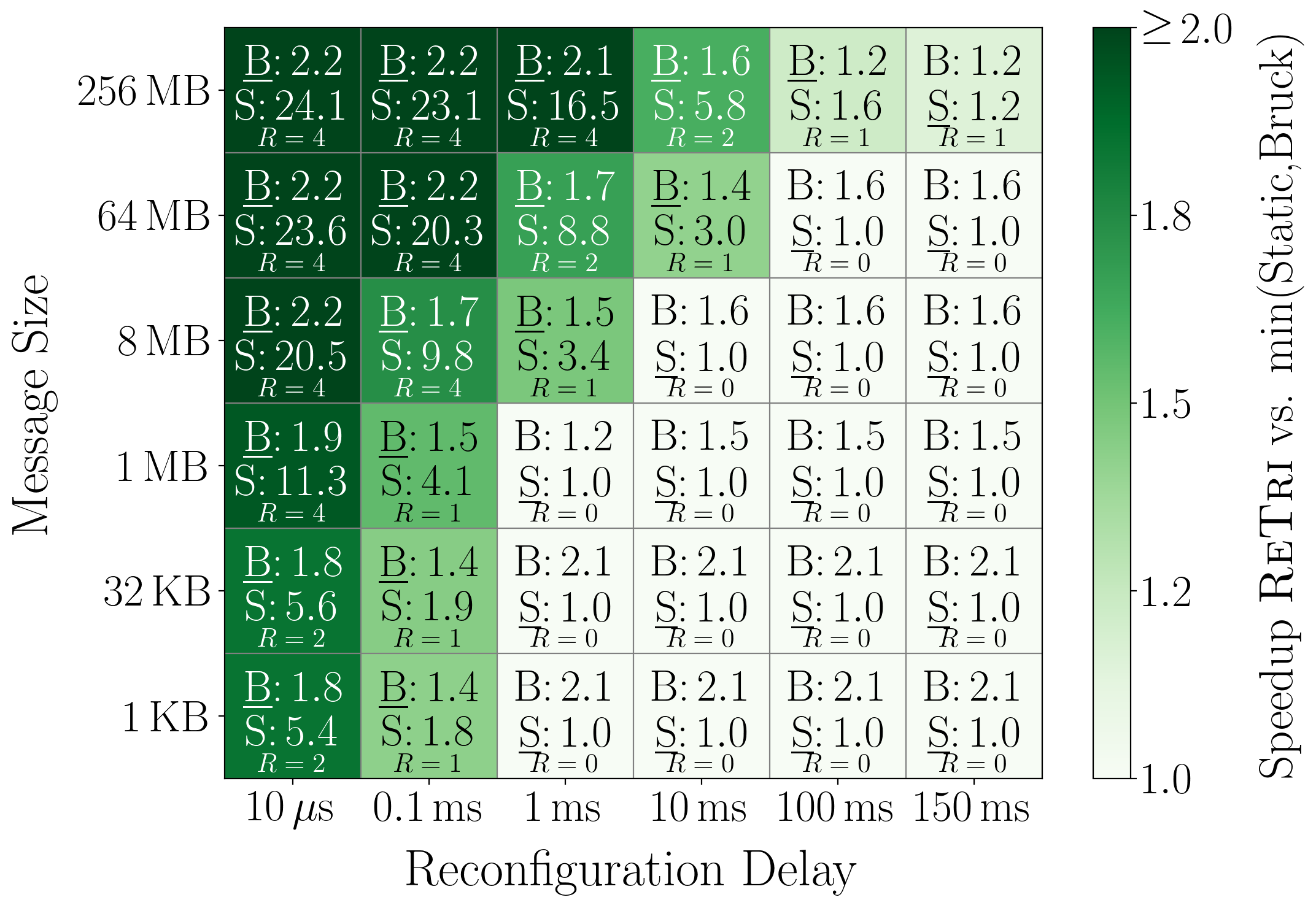}
    \caption{Heatmap showing All-to-All speedups of \name ($n=243$) compared to both reconfigurations with Bruck (B) ($n=256$) and static Pairwise All-to-All (S). Since Bruck operates on a larger network for this experiment, the completion times are normalized by the number of nodes in the network. Each completion time is divided by the number of nodes and only then compared to \name to compute the relative performance.}
    \label{fig:maxtrix_256_all}
\end{figure}
\end{document}